%
%


\documentclass[11pt]{article}
\usepackage{acl2016}
\usepackage{times}
\usepackage{url}
\usepackage{latexsym}
\usepackage{graphicx}

\usepackage{subfigure}


\usepackage{amsmath, amssymb, times}
\usepackage{amsthm}
\usepackage{algorithm}
\usepackage{algorithmic}
\usepackage{amsmath}
\usepackage{amsfonts}
\usepackage{hyperref}

\aclfinalcopy 


\newcommand{\mathbs}{\boldsymbol}

\newtheorem{lemma}{Lemma}
\newtheorem{theorem}{Theorem}

\title{Solve-Select-Scale: A Three Step Process For Sparse Signal Estimation}

\author{Mithun Das Gupta \\
  {\tt mithundasgupta@gmail.com} }

\date{}

\begin{document}
\maketitle
\begin{abstract}
In the theory of compressed sensing (CS), the sparsity $\|x\|_0$ of the
unknown signal $\mathbf{x} \in \mathcal{R}^n$ is of prime importance and the focus of reconstruction algorithms has mainly been either $\|x\|_0$ or its convex relaxation (via $\|x\|_1$).  
However, it is typically unknown in practice and has remained a challenge when nothing about the size of the support is known. As pointed recently, $\|x\|_0$ might not be the best metric to minimize directly, both due to its inherent complexity as well as its noise performance. 
Recently a novel stable measure of sparsity $s(\mathbf{x}) := \|\mathbf{x}\|_1^2/\|\mathbf{x}\|_2^2$ has been investigated by Lopes~\cite{Lopes2012}, which is a sharp lower bound on $\|\mathbf{x}\|_0$. The estimation procedure for this measure uses only a small number of linear measurements, does not rely on any sparsity assumptions, and requires very little computation. 
The usage of the quantity $s(\mathbf{x})$ in sparse signal estimation problems has not received much importance yet. We develop the idea of incorporating $s(\mathbf{x})$ into the signal estimation framework. We also provide a three step algorithm to solve problems of the form $\mathbf{Ax=b}$ with no additional assumptions on the original signal $\mathbf{x}$.
\end{abstract}

\section{Introduction}\label{SEC:Intro}
Recently many measures of sparsity have been proposed to estimate the support of a vector $\mathbf{x}\in \mathcal{R}^n$ where the support is assumed to be smaller than the data dimension $n$.  
\begin{figure}[ht]
\begin{center}
{\includegraphics[width=8cm]{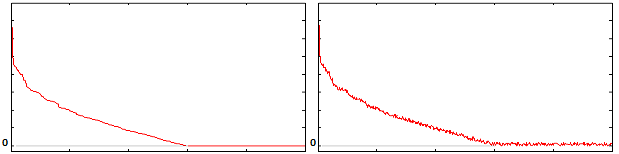}}
\caption{Left: Original sorted signal, and right: noisy version of left. The $\|x\|_0$ norm increases from 300 to 500. The norm ratio $s(\mathbf{x})$ changes from 190 to 210.}
\label{Fig:sx}
\end{center}
\end{figure}
The ratio of the two norms
\begin{equation}\label{Eq:Sx}
 s(\mathbf{x}) := \|\mathbf{x}\|_1^2/\|\mathbf{x}\|_2^2
\end{equation}
has been proven to be useful measure of sparsity by multiple researchers~\cite{Hoyer04,Hurley09,Tang11,Lopes2012}.
Lopes~\cite{Lopes2012} proposed an estimate $\tilde{s}(\mathbf{x})$ for $s(\mathbf{x})$ based on sketching with p-stable distributions. The efficacy of such a measure can be established by looking at the signals presented in Fig.~\ref{Fig:sx}. The $\|x\|_0$ norm of the signal jumps from 300 to $n=500$ (full signal length) with the addition of noise. Intrestingly, sparsity measure $s(\mathbf{x})$ for the noisy signal in Fig.~\ref{Fig:sx} grows from marginally from 190 to 210. The inherent noise robustness for the two metrics can be judged by the fact that the noise level is several dbs lesser than the original signal. We argue that such a sparsity measure needs to be incorporated in sparse signal estimation due to the increased robustness to noise. 

\section{New Direction}
In this paper we would like to estimate the unknown bounded sparse signal $\{\mathbf{x} | \mathbf{x}\in \mathcal{R}^N, \|\mathbf{x}\|_\infty \le M \}$, given linear measurements of the form $\mathbf{Ax=b}$ with \textbf{NO} additional assumptions about the sparsity of the signal $\mathbf{x}$. We assume that the infinity norm bound $M$ has been absorbed into the sensing matrix and henceforth we talk of signals $\|\mathbf{x}\|_\infty \le 1$. We assume $\mathbf{A}\in \mathcal{R}^{M\times N}$ is a Gaussian random measurement matrix.
It has been shown in~\cite{Lopes2012} that $s(\mathbf{x})$ is strictly upper-bounded by the actual $l_0$-norm of a vector, namely
\begin{equation}\label{Eqn:firstIneq}
  s(\mathbf{x}) \le \|\mathbf{x}\|_0
\end{equation}
Since both the quantities in Eq.~\ref{Eqn:firstIneq} are positive quantities with $\|\mathbf{x}\|_0 \ge 1$, taking log on both sides and invoking the Jensen's inequality leads to
\begin{eqnarray}
  && \log( s(\mathbf{x})) \le \log( \|\mathbf{x}\|_0) \\
  &\Rightarrow& \log \|\mathbf{x}\|_1^2 - \log \|\mathbf{x}\|_2^2 \le \log( \|\mathbf{x}\|_0) \\
  \label{Eqn:secondIneq}
  &\Rightarrow& \sum_i^N \log |x_i|^2} - {\log \|\mathbf{x}\|_2^2 \le \log( \|\mathbf{x}\|_0)
\end{eqnarray}
Consequently, we would like to maximize the quantity on the LHS in Eq.~\ref{Eqn:secondIneq} while maintaining the data term $\|\mathbf{Ax - b}\|^2$ as low as possible. The maximization can be turned into an alternate minimization problem with a switch of signs. Before proceeding to the optimization technique, we deal a little longer on the properties of the new term which we propose to minimize.

\begin{theorem}
Let
\begin{equation}\label{EQN:SparseMetric}
f(\mathbf{x}) = -\sum_i^n \log |x_i|^2 + \log \|\mathbf{x}\|_2^2
\end{equation}
then the Hessian $\mathbf{H}$ of $f(\mathbf{x})$ is positive definite for any bounded signal $\{\mathbf{x}\in \mathcal{R}^n|~\|\mathbf{x}\|<\infty, x_i > 0 \forall i \}$.
\end{theorem}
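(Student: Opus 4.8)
The plan is to compute the Hessian of $f$ explicitly and show it splits as the sum of a diagonal positive-definite piece coming from the $-\sum_i \log x_i^2$ term and a (negative semidefinite) rank-one piece coming from $\log\|\mathbf{x}\|_2^2$, and then argue the diagonal part dominates. Since on the domain each $x_i>0$ we may write $-\sum_i \log x_i^2 = -2\sum_i \log x_i$. First I would record the gradient: $\partial_i f = -2/x_i + 2x_i/\|\mathbf{x}\|_2^2$. Differentiating again, the term $-2\sum_i\log x_i$ contributes the diagonal matrix $\mathbf{D}$ with $D_{ii} = 2/x_i^2$, which is positive definite. The term $\log\|\mathbf{x}\|_2^2$ contributes $\nabla^2 \log\|\mathbf{x}\|_2^2 = \frac{2}{\|\mathbf{x}\|_2^2}\mathbf{I} - \frac{4}{\|\mathbf{x}\|_2^4}\mathbf{x}\mathbf{x}^\top$, so
\[
\mathbf{H} = \mathbf{D} + \frac{2}{\|\mathbf{x}\|_2^2}\mathbf{I} - \frac{4}{\|\mathbf{x}\|_2^4}\mathbf{x}\mathbf{x}^\top .
\]

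Next I would test positive definiteness against an arbitrary $\mathbf{v}\neq \mathbf{0}$. The first two terms give $\mathbf{v}^\top(\mathbf{D} + \tfrac{2}{\|\mathbf{x}\|_2^2}\mathbf{I})\mathbf{v} \ge \sum_i \tfrac{2}{x_i^2} v_i^2$, which is strictly positive, and the only negative contribution is $-\tfrac{4}{\|\mathbf{x}\|_2^4}(\mathbf{v}^\top\mathbf{x})^2$. So it suffices to show
\[
\sum_i \frac{2 v_i^2}{x_i^2} \;\ge\; \frac{4(\mathbf{v}^\top\mathbf{x})^2}{\|\mathbf{x}\|_2^4},
\]
with equality impossible unless $\mathbf{v}=\mathbf{0}$. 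I would obtain this from Cauchy–Schwarz applied to the vectors with components $v_i/x_i$ and $x_i$: namely $(\mathbf{v}^\top\mathbf{x})^2 = \bigl(\sum_i (v_i/x_i)(x_i^2)\bigr)^2 \le \bigl(\sum_i v_i^2/x_i^2\bigr)\bigl(\sum_i x_i^4\bigr)$. Then it remains to check $\sum_i x_i^4 \le \|\mathbf{x}\|_2^4 = \bigl(\sum_i x_i^2\bigr)^2$, which is immediate since the right-hand side expands to $\sum_i x_i^4 + \sum_{i\neq j} x_i^2 x_j^2 \ge \sum_i x_i^4$. Chaining these, $\tfrac{4(\mathbf{v}^\top\mathbf{x})^2}{\|\mathbf{x}\|_2^4} \le \tfrac{4}{\|\mathbf{x}\|_2^4}\bigl(\sum_i v_i^2/x_i^2\bigr)\bigl(\sum_i x_i^4\bigr) \le 4\sum_i v_i^2/x_i^2$, which unfortunately only gives $\mathbf{H}\succeq 0$ with the factor $4$ versus $2$ not closing the gap directly.

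So the main obstacle is precisely this constant mismatch: the crude bound loses a factor of $2$. To close it I would keep the $\tfrac{2}{\|\mathbf{x}\|_2^2}\mathbf{I}$ term in play rather than discarding it, or sharpen Cauchy–Schwarz. A cleaner route: note $(\mathbf{v}^\top\mathbf{x})^2 \le \bigl(\sum_i v_i^2/x_i^2\bigr)\bigl(\sum_i x_i^4\bigr)$ and set $t = \bigl(\sum_i x_i^4\bigr)/\|\mathbf{x}\|_2^4 \in (0,1]$; then the negative term is at most $4t\sum_i v_i^2/x_i^2$ while the diagonal piece alone is $2\sum_i v_i^2/x_i^2$, so positivity holds outright whenever $t \le 1/2$, and for $t$ near $1$ — which forces $\mathbf{x}$ to be nearly a coordinate vector — one uses the $\tfrac{2}{\|\mathbf{x}\|_2^2}\mathbf{I}$ term together with a refined split of $\mathbf{v}$ into its component along $\mathbf{x}$ and orthogonal to it. Writing $\mathbf{v} = \alpha\hat{\mathbf{x}} + \mathbf{w}$ with $\mathbf{w}\perp\mathbf{x}$, the rank-one term only sees $\alpha$, and on that one-dimensional subspace $\mathbf{D}$ already contributes $2\alpha^2\sum_i \hat x_i^2/x_i^2 = 2\alpha^2 \sum_i 1/\|\mathbf{x}\|_2^2 \cdot$ (something $\ge 1/\|\mathbf{x}\|_2^2$ per nonzero coordinate), which I expect beats $4\alpha^2/\|\mathbf{x}\|_2^2$ once $\mathbf{x}$ has at least two nonzero entries, with the single-nonzero-entry case handled separately (there $f$ restricted appropriately is strictly convex by direct inspection). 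I would present the argument via this orthogonal decomposition, as it makes the strictness transparent and isolates the one genuinely delicate regime.
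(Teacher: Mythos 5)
Your computation of the Hessian is correct and agrees with the paper's Eqs.~(\ref{Eqn:h_ii})--(\ref{Eqn:h_ij}): writing $S_2=\|\mathbf{x}\|_2^2$, one has $\mathbf{H}=\mathbf{D}+\frac{2}{S_2}\mathbf{I}-\frac{4}{S_2^2}\mathbf{x}\mathbf{x}^\top$ with $D_{ii}=2/x_i^2$. The gap is in the positivity step. Your first attempt (Cauchy--Schwarz weighted so that $\sum_i x_i^4$ appears) provably loses a factor of $2$, as you acknowledge. Your proposed repair via the split $\mathbf{v}=\alpha\hat{\mathbf{x}}+\mathbf{w}$ with $\mathbf{w}\perp\mathbf{x}$ does not work as stated, because $\mathbf{D}$ is not a multiple of the identity: $\mathbf{v}^\top\mathbf{D}\mathbf{v}$ does \emph{not} decompose as $\alpha^2\hat{\mathbf{x}}^\top\mathbf{D}\hat{\mathbf{x}}+\mathbf{w}^\top\mathbf{D}\mathbf{w}$, and the cross term $2\alpha\,\hat{\mathbf{x}}^\top\mathbf{D}\mathbf{w}$ can be large and negative (take $\mathbf{D}=\mathrm{diag}(100,1)$, $\hat{\mathbf{x}}=(1,1)/\sqrt2$, $\mathbf{w}=(-1,1)/\sqrt2$: then $\mathbf{v}=\hat{\mathbf{x}}+\mathbf{w}=(0,\sqrt2)$ gives $\mathbf{v}^\top\mathbf{D}\mathbf{v}=2$ while $\alpha^2\hat{\mathbf{x}}^\top\mathbf{D}\hat{\mathbf{x}}+\mathbf{w}^\top\mathbf{D}\mathbf{w}=101$). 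So the claim that ``$\mathbf{D}$ already contributes $2n\alpha^2/\|\mathbf{x}\|_2^2$ on that subspace'' is not a valid lower bound on the relevant part of the quadratic form, and the two-regime argument in $t$ is not closed.

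The fix is a one-liner that stays entirely within your framework: apply the \emph{unweighted} Cauchy--Schwarz inequality $(\mathbf{v}^\top\mathbf{x})^2\le\|\mathbf{v}\|_2^2\,S_2$, so that the rank-one term is dominated by twice the $\frac{2}{S_2}\mathbf{I}$ term rather than by $\mathbf{D}$:
\begin{equation*}
\mathbf{v}^\top\mathbf{H}\mathbf{v}\;\ge\;2\sum_i\frac{v_i^2}{x_i^2}+\frac{2}{S_2}\|\mathbf{v}\|_2^2-\frac{4}{S_2}\|\mathbf{v}\|_2^2\;=\;2\sum_i v_i^2\Bigl(\frac{1}{x_i^2}-\frac{1}{S_2}\Bigr)\;\ge\;0,
\end{equation*}
with strict inequality for every $\mathbf{v}\neq\mathbf{0}$ whenever $n\ge2$ and all $x_i>0$, since then $x_i^2<S_2$ for each $i$. (For $n=1$ the function $f$ is identically zero and $\mathbf{H}=0$, so the statement silently requires $n\ge2$; the paper's proof has the same blind spot.) With this replacement your route becomes complete and is genuinely different from---and cleaner than---the paper's argument, which expands $\mathbf{y}^\top\mathbf{H}\mathbf{y}$ row by row and completes squares by hand to reach the analogous bound $\frac{2}{S_2^2}\sum_i(1-y_i^2)x_i^2$ plus positive terms.
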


\begin{proof}
Without loss of generality let us assume $\|\mathbf{y}\|_2 = 1$. 
We need to prove that 
\begin{equation}
\mathbf{y}^T\mathbf{Hy} = \sum_i^n y_i^2 h_{ii} + \sum_i^n \sum_j^n y_i y_j h_{ij} > 0
\end{equation}
for any $\mathbf{y}$, where $h_{ii}$ is the $i^{th}$ diagonal element and $h_{ij}$ is the element at position $(i,j)$ of the matrix $\mathbf{H}$. The elements of $\mathbf{H}$ may in turn be written as
\begin{eqnarray}
\label{Eqn:h_ii}
h_{ii} & = &  2\frac{(S_2-x_i^2)(2x_i^2+ S_2)}{x_i^2S_2^2}\\ 
\label{Eqn:h_ij}
h_{ij} & = &  -4\frac{x_ix_j}{S_2^2}\\
S_2 &=& \sum x_i^2
\end{eqnarray}
For the $1^{st}$ row of $\mathbf{H}$ we can write the summation as
\begin{eqnarray}
&y_1^2h_{11} + \sum_j y_1 y_j h_{1j} = &  \\ \nonumber
& \frac{2}{x_1^2S_2^2} 
	\begin{pmatrix}
       y_1^2 (x_2^2+\ldots+x_n^2)(3x_1^2+x_2^2+\ldots+x_n^2)  \\
	-2y_1 y_2x_1^3x_2 \\
	-2y_1 y_3x_1^3x_3 \\
	\vdots \\
	-2y_1 y_nx_1^3x_n \\
     \end{pmatrix}&
\end{eqnarray}
which can be re-written as
\begin{eqnarray}
& y_1^2h_{11} + \sum_j y_1 y_j h_{1j} = & \\ \nonumber
& \frac{2}{x_1^2S_2^2} 
	\begin{pmatrix}
       y_1^2(x_2^2+\ldots+x_n^2)^2  \\
	+3y_1^2x_1^2(x_2^2+\ldots+x_n^2)  \\
	-2y_2x_1^2y_1x_1x_2 - \ldots -2y_nx_1^2y_1x_1x_n \\
     \end{pmatrix} &\\ \nonumber
& =  \frac{2}{x_1^2S_2^2} 
	\begin{pmatrix}
       y_1^2(x_2^2+\ldots+x_n^2)^2  \\
	+2y_1^2x_1^2(x_2^2+\ldots+x_n^2)  \\
	-(y_2^2 + y_3^2 + \ldots + y_n^2)x_1^4\\
	+(y_2x_1^2 - y_1x_1x_2)^2 \\
	\vdots \\
	+ (y_nx_1^2 - y_nx_1x_n)^2  \\
     \end{pmatrix} &
\end{eqnarray}
taking the second and third terms from the bracket separately, we can write
\begin{eqnarray}
&y_1^2h_{11} + \sum_j y_1 y_j h_{1j} = & \\ \nonumber
& [\mathrm{+ve ~terms}] + \frac{2}{S_2^2} 
	\begin{pmatrix}
      2y_1^2(x_2^2+\ldots+x_n^2)  \\
	-(y_2^2 + y_3^2 + \ldots + y_n^2)x_1^2 \\
     \end{pmatrix} &\\ \nonumber
& = [\mathrm{+ve ~terms}] + \frac{2}{S_2^2} 
	\begin{pmatrix}
      2y_1^2(x_1^2 + x_2^2+\ldots+x_n^2)  \\
	-(y_1^2+y_2^2  + \ldots + y_n^2)x_1^2 \\
	- y_1^2x_1^2 \\
     \end{pmatrix}&
\end{eqnarray}
which can be succintly written as
\begin{equation}
y_1^2h_{11} + \sum_j y_1 y_j h_{1j} = \frac{4y_1^2}{S_2} -  \frac{2(1+y_1^2)x_1^2}{S_2^2} + [\boldsymbol{+}]
\end{equation}
where $[\boldsymbol{+}]$ denotes positive terms and the simplification arrived at by using $S_2 = \sum x_i^2$ and $\|\mathbf{y}\|_2=1$.
Adding the contributions of all the rows of $\mathbf{H}$ we get
\begin{equation}
\mathbf{y}^T\mathbf{Hy} = \frac{2}{S_2^2}\sum (1-y_i^2)x_i^2 + [\boldsymbol{+}]> 0 
\end{equation}
\end{proof}

\section{Optimization}\label{SEC:Optimization}
We have established that for positive elements of the unknown vector, the cost function allows a strictly positive definite Hessian. For elements of the unknown vector, which are zero or negative, the cost function is not defined. We now propose a variable separation technique, which takes care of both the zero as well as the negative components of the unknown vector in a principled way. The optimization problem till this point can be written as
\begin{eqnarray}
  \min_\mathbf{x} && -\sum_i \log |x_i|^2 + {\log \|\mathbf{x}\|_2^2} \\
  s.t. &&  \mathbf{Ax = b}
\end{eqnarray}

We decompose the previous problem into a separable optimization function by introducing the auxiliary vector $\mathbf{c}$ as
\begin{eqnarray}
  \min_\mathbf{c} && -\sum_i \log |c_i|^2 + {\log \|\mathbf{c}\|_2^2} \\
  s.t. && \mathbf{c = x} \\
   && \mathbf{Ax = b}
\end{eqnarray}
The optimization problem can now be written as two separate equations minimizing w.r.t to $\mathbf{x}$ and $\mathbf{c}$ as
\begin{eqnarray} \label{Eq:Min_x}
  \min_\mathbf{x} && \\
  \nonumber
  & \frac{1}{2}\|\mathbf{Ax - b}\|^2 + \eta \|\mathbf{c-x}\|^2 & \\
  \label{Eq:Min_c}
  \min_\mathbf{c} && \\
  \nonumber
  & \eta \|\mathbf{c-x}\|^2 - \sum \log |c_i|^2 + {\log \|\mathbf{c}\|_2^2}  &
\end{eqnarray}
This set of equation leads to a sparse solution of $Ax = b$, as the Lagrange multiplier $\eta \rightarrow \infty$. For a given vector $\mathbf{c}$, Eq.~\ref{Eq:Min_x}, has a unique minimizer given by
\begin{equation}\label{Eqn:Min_x_2}
  \mathbf{A}^T(\mathbf{Ax-b})-2\eta (\mathbf{c-x}) = 0
\end{equation}
which leads to the update of $\mathbf{x}$ (given fixed $\mathbf{c}$) as

\begin{equation}\label{Eqn:Min_x_3}
  \mathbf{x} = (\mathbf{A}^T\mathbf{A} + 2\eta \mathbf{I})^{-1}(\mathbf{A}^T\mathbf{b} + 2\eta \mathbf{c})
\end{equation}
A slightly faster way to solve for $\mathbf{x}$ is to compute the eigen decomposition of $\mathbf{A}^T\mathbf{A} = \mathbf{L}\mathbs{\Lambda} \mathbf{L}^T$ and then just adding $\eta$ to the eigenvalues in each iteration. This update can be written as $\mathbf{x} = \mathbf{L}(\mathbs{\Lambda} + 2\eta \mathbf{I})^{-1}\mathbf{L}^T(\mathbf{A}^T\mathbf{b} + 2\eta \mathbf{c})$.

Now let us take a closer look at Eq.~\ref{Eq:Min_c}. Let us assume that $\mathbf{c}^\star$ is a solution for Eq.~\ref{Eq:Min_c} for a given fixed $\mathbf{x}$. Then the following lemmas adopted from Duchi et al.~\cite{Duchi08} hold for the solution of Eq.~\ref{Eq:Min_c}.
\begin{lemma}
Let $p$ and $q$ be two indices such that $x_p > x_q$ . If $c^{\star}_p=0$ then $c^{\star}_q$ must be zero as well.
\end{lemma}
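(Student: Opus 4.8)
The plan is to prove this by the coordinate-swap (exchange) argument that underlies Lemma~1 of Duchi et al.~\cite{Duchi08}, applied to the objective in Eq.~\ref{Eq:Min_c}. Suppose, for contradiction, that $\mathbf{c}^\star$ is optimal for Eq.~\ref{Eq:Min_c} (for the given fixed $\mathbf{x}$) with $c^\star_p = 0$ but $c^\star_q \neq 0$, even though $x_p > x_q$. I would then exhibit a competitor $\tilde{\mathbf{c}}$ that agrees with $\mathbf{c}^\star$ in every coordinate except that the $p$-th and $q$-th entries are interchanged, i.e. $\tilde c_p = c^\star_q$, $\tilde c_q = c^\star_p = 0$, and $\tilde c_i = c^\star_i$ for $i \neq p,q$, and show that $\tilde{\mathbf{c}}$ attains a strictly smaller objective value than $\mathbf{c}^\star$, contradicting optimality and forcing $c^\star_q = 0$.

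The observation that makes this work is that two of the three terms in Eq.~\ref{Eq:Min_c}, namely $-\sum_i \log|c_i|^2$ and $\log\|\mathbf{c}\|_2^2$, depend on $\mathbf{c}$ only through its multiset of entries, hence are invariant under the transposition of coordinates $p$ and $q$. So the entire change in the objective comes from the quadratic proximal term $\eta\|\mathbf{c}-\mathbf{x}\|^2$, and within that term only the $p$-th and $q$-th summands move. A direct computation then gives $\|\mathbf{c}^\star-\mathbf{x}\|^2 - \|\tilde{\mathbf{c}}-\mathbf{x}\|^2 = \bigl(x_p^2 + (c^\star_q - x_q)^2\bigr) - \bigl((c^\star_q - x_p)^2 + x_q^2\bigr) = 2\,c^\star_q\,(x_p - x_q)$, which is strictly positive as soon as $c^\star_q > 0$ and $x_p > x_q$. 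Multiplying by $\eta > 0$ shows the swapped vector strictly decreases the cost, which is the contradiction.

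The step I expect to need the most care is reconciling the logarithmic penalty with the hypothesis $c^\star_p = 0$: since $-\sum_i\log|c_i|^2$ diverges to $+\infty$ as any coordinate tends to zero, no exact minimizer of Eq.~\ref{Eq:Min_c} has a vanishing entry, so the lemma must be read for the \emph{effective} (thresholded / large-$\eta$ limiting) solution that the iteration actually produces — which is precisely the regime in which the lemma is invoked. Under that reading the exchange argument transfers verbatim, because it only rearranges the permutation-symmetric and quadratic pieces and never evaluates the logarithm at a new point. The second point to pin down is the sign convention: the difference $2c^\star_q(x_p-x_q)$ is positive only when $c^\star_q$ has the sign of $x_p-x_q>0$, which is automatic in the nonnegative (sorted-magnitude) setting inherited from Duchi et al.~\cite{Duchi08} and consistent with the positivity regime $x_i>0$ of the preceding theorem; I would therefore state and use the lemma over that regime, where the variable-separation scheme has already reduced the problem to nonnegative components.
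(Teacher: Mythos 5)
Your proposal is correct and follows essentially the same route as the paper: the paper's proof is exactly the swap-the-two-coordinates exchange argument, which you have simply carried out explicitly (the permutation invariance of the logarithmic and norm terms, and the computation $2c^\star_q(x_p-x_q)>0$ for the quadratic term). Your added remarks on the divergence of the log penalty at zero and the nonnegativity convention are reasonable caveats that the paper glosses over, but they do not change the argument.
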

\begin{proof}
Assume that $c^{\star}_q$ is non zero. Lower cost can then be obtained by swapping $c^{\star}_q$ and $c^{\star}_p$ which is a contradiction since $\mathbf{c}^{\star}$ obtains the minimum cost.
\end{proof}
\begin{lemma}\label{LEMMA:2}
For all indices $i\in[1,\ldots,n]$, $x_ic^{\star}_i \ge 0$.
\end{lemma}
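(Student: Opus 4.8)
The plan is to exploit the structure of the objective in Eq.~\ref{Eq:Min_c}: the two logarithmic terms $-\sum_i \log|c_i|^2$ and $\log\|\mathbf{c}\|_2^2$ depend on $\mathbf{c}$ only through the magnitudes $|c_i|$, so the sign of each coordinate of $\mathbf{c}$ is felt only through the proximity term $\eta\|\mathbf{c}-\mathbf{x}\|^2$. Since $(\pm|c_i|-x_i)^2$ is smallest when the sign of $c_i$ matches the sign of $x_i$, a minimizer cannot place a coordinate on the ``wrong'' side of the origin relative to $\mathbf{x}$. This is the same exchange-argument flavour as Lemma~1, but with a sign flip in place of a coordinate swap.

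Concretely, I would first dispose of the trivial case: if $x_i=0$ then $x_i c_i^{\star}=0\ge 0$ and there is nothing to prove, so assume $x_i\neq 0$. Next, argue by contradiction: suppose $x_i c_i^{\star}<0$ for some index $i$; then in particular $c_i^{\star}\neq 0$ and $c_i^{\star}$, $x_i$ have opposite signs. Define $\tilde{\mathbf{c}}$ to agree with $\mathbf{c}^{\star}$ in every coordinate except the $i$-th, where $\tilde c_i=-c_i^{\star}$.

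Then I would verify the two ingredients. First, since $|\tilde c_i|=|c_i^{\star}|$, we have $\sum_j \log|\tilde c_j|^2=\sum_j \log|c_j^{\star}|^2$ and $\|\tilde{\mathbf{c}}\|_2^2=\|\mathbf{c}^{\star}\|_2^2$; moreover $|\tilde c_i|>0$, so the logarithms stay well defined and the two logarithmic terms are unchanged. Second, the proximity term strictly decreases:
\begin{equation}
\|\mathbf{c}^{\star}-\mathbf{x}\|^2-\|\tilde{\mathbf{c}}-\mathbf{x}\|^2 = (c_i^{\star}-x_i)^2-(-c_i^{\star}-x_i)^2 = -4\,x_i c_i^{\star} > 0 .
\end{equation}
Hence $\tilde{\mathbf{c}}$ attains a strictly smaller value of the objective in Eq.~\ref{Eq:Min_c} than $\mathbf{c}^{\star}$, contradicting the optimality of $\mathbf{c}^{\star}$. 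Therefore $x_i c_i^{\star}\ge 0$ for every $i$.

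I do not expect a real obstacle here; the only points needing a word of care are that the sign-flipped $\tilde{\mathbf{c}}$ stays in the domain where the objective is finite (guaranteed because $|\tilde c_i|=|c_i^{\star}|\neq 0$), and that the inequality above is \emph{strict}, which is precisely what the assumption $x_i c_i^{\star}<0$ supplies. If one wished to be scrupulous one could also note that a minimizer $\mathbf{c}^{\star}$ with finite cost exists (so that the comparison is meaningful), but this is already presupposed by the statement of the lemma.
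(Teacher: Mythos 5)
Your proof is correct and follows essentially the same route as the paper: flip the sign of the offending coordinate, observe that the logarithmic terms depend only on $|c_i|$ and are therefore unchanged, and show the proximity term strictly decreases by $-4x_ic_i^{\star}>0$, contradicting optimality. Your additional remarks (the trivial case $x_i=0$ and the fact that the flipped vector stays in the domain where the objective is finite) are slightly more careful than the paper's version but do not change the argument.
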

\begin{proof}
Assume that for a particular index $j$ the assumption does not hold and hence $x_jc^{\star}_j < 0$. Construct a new vector $\mathbf{\hat{c}}$ with all elements equal to $\mathbf{c}^{\star}$ except the element at location $j$. For this let the value for the new vector be $\hat{c}_j = -c^{\star}_j$. Since the norm terms remain unchanged in Eq.~\ref{Eq:Min_c}, the net difference in cost for the two choices can be written as 
\begin{eqnarray*}
  && \|\mathbf{x}-\mathbf{c}^{\star}\|_2^2 - \|\mathbf{x-\hat{c}}\|_2^2 \\
  &=& (x_j - c^{\star}_j)^2 - (x_j - (-c^{\star}_j))^2 \\
  &=& -2 x_j c^{\star}_j - ( 2 x_j c^{\star}_j) \\
  &=& -4 x_j c^{\star}_j > 0 ~~ (since~ x_j c^{\star}_j < 0).
\end{eqnarray*}
Hence we have constructed a solution which has lower cost than the optimal solution which is a contradiction.
\end{proof}

These two lemmas jointly enable the analysis of Eq.~\ref{Eq:Min_c} for a sorted vector in the positive orthant (Duchi et al.~\cite{Duchi08}).
Let us assume that $\mathbf{\hat{x}}$ is the vector obtained after sorting $\mathbf{{x}}$ in descending order. 
\begin{equation}\label{Eqn:x_positive}
  \mathbf{\hat{x}} = \mathrm{sort}( \mathrm{abs}(\mathbf{x}), \mathrm{descend})
\end{equation}
Also note that the sign of $c_i$ will be the same as $\hat{x}_i$ (from Lemma.~\ref{LEMMA:2}), hence we can discard the sign from the subsequent analysis and plug it back to the final result. 
In fact the cost function for minimization w.r.t $\mathbf{c}$ in Eq.~\ref{Eq:Min_c} has a striking similarity to the projection onto simplex cost function in Duchi et al.~\cite{Duchi08}. The sum over $\log |c_i|$ is similar to sum of individual elements ($l_1$-norm) in their equation. The norm value $z$ is somewhat similar to the $\log  \|\mathbf{c}\|_2^2$ term in our development. The beauty of our development is that it does not have any user supplied constants (the signal norm $z$ in \cite{Duchi08}) to deal with. Both the norm quantities are signal entities which can be estimated from the current signal.

The Lagrangian for Eq.~\ref{Eq:Min_c} 
with respect to $\mathbf{\hat{x}}$, with additional positivity constraints 
can now be written as
\begin{equation}\label{Eqn:Min_C_positive}
  L = \min_\mathbf{c}~~ \eta \|\mathbf{c-\hat{x}}\|^2 - \sum \log |c_i|^2 + {\log \|\mathbf{c}\|_2^2} - \mathbs{\zeta}. \mathbf{c}
\end{equation}
The first order optimality condition, for all positive $c_i$ can be written as
\begin{equation}\label{Eqn:Min_C_positive}
  \frac{\delta L}{\delta c_i} = 2\eta (c_i-\hat{x}_i) - \frac{2}{c_i} + \frac{2 c_i}{\|\mathbf{c}\|_2^2} - \zeta_i = 0
\end{equation}
The complementary slackness KKT condition implies that whenever $c_i > 0$ we must have that $\zeta_i = 0$. Thus, if $c_i > 0$ we get
\begin{equation}\label{Eqn:c_i}
  \eta (c_i-\hat{x}_i)c_i - 1 + \frac{c_i^2}{\|\mathbf{c}\|_2^2} = 0
\end{equation}

Summing the KKT condition in Eq.~\ref{Eqn:c_i} for all the nonzero components of $\mathbf{c}$, we get
\begin{eqnarray}
\label{Eqn:c_i_sum}
  && \sum^{\rho} \left(\eta (c_i-\hat{x}_i)c_i - 1 + \frac{c_i^2}{\|\mathbf{c}\|_2^2}\right) = 0 \\
  \label{Eqn:c_i_sum_2}
 \Rightarrow && \rho = 1 + \sum_\rho \eta (c_i-\hat{x}_i)c_i \\
 \label{Eqn:rho_2}
 \Rightarrow && \rho = 1 + \eta \mathbf{c}^T(\mathbf{c-\hat{x}})
\end{eqnarray}
where $\rho$ is the number of non-zero entries in $\mathbf{c}$.



The development till now flows seamlessly if we know the value of $\rho$, or corresponding non-zero entries in $\mathbf{c}$. We can arrive at this value by substituting Eq.~\ref{Eqn:rho_2} in Eq.~\ref{Eqn:Min_x_2}.
\begin{equation}\label{Eqn:rho_3}
  \rho = \min[1 + \mathbf{c}^T\mathbf{A}^T(\mathbf{Ax-b})/2, N]
\end{equation}
We claim that this is the first such `soft' equation to identify the support of the unknown vector. The ease of obtaining this quantity from the optimization process itself, without any apriori assumptions is one of the key novelties of our formulation.

For known value of $\rho$, 
$\mathbf{c}$ can be arrived at by finding the roots of the equation
\begin{equation}\label{Eqn:c_eqn}
\eta \mathbf{c}^T(\mathbf{c-\hat{x}})  = \rho - 1
\end{equation}

This is a family of equations which can be solved in multiple ways. The simplest method which treats all components equivalently is to complete the square.
\begin{eqnarray}
\nonumber 
  && \mathbf{c}^T\mathbf{c} -2 \mathbf{c}^T\frac{\mathbf{\hat{x}}}{2} + \frac{\mathbf{\hat{x}}^T\mathbf{\hat{x}}}{4} = \frac{\rho-1}{\eta} + \frac{\mathbf{\hat{x}}^T\mathbf{\hat{x}}}{4}\\
 \label{Eqn:c_solution}
 \Rightarrow && \|\mathbf{c}-\frac{\mathbf{\hat{x}}}{2}\|_2^2 = \frac{\rho-1}{\eta} + \frac{\|\mathbf{\hat{x}}\|_2^2}{4}
\end{eqnarray}
which is a hypersphere with its center at $\frac{\mathbf{\hat{x}}}{2}$ and radius $r$ equal to $\sqrt{\frac{\rho-1}{\eta} + \frac{\mathbf{\hat{x}}^T\mathbf{\hat{x}}}{4}}$. 
%

Another possible solution for the system of equations in Eq.~\ref{Eqn:c_eqn} can be obtained by noting that it can be written as
\begin{equation}\label{Eqn:c_eqn_2}
c_1(c_1-\hat{x}_1) + \sum_{i=2}^{\rho}( c_i(c_i-\hat{x}_i) - 1/\eta)   = 0
\end{equation}
where the largest component $c_1$ of $\mathbf{c}$ is exactly matched with its counterpart $\hat{x}_1$, while the other smaller components are scaled. Empirically this solution seems to work better than the uniform scaling technique and accounts for the special case of $\rho=1$. Finally, solving the quadratic equation and choosing the positive root, the update for $\mathbf{c}$ can be written as
\begin{equation}\label{Eqn:c_update_scaled}
c_i = \hat{x}_i \frac{1 + \sqrt{1 + \frac{4}{\eta x_i^2}[i \neq 1]}}{2}
\end{equation}
where $[.]$ is the indicator function. 


Finally, reshuffling the entries to match the original vector $\mathbf{x}$ and putting the sign of $\mathbf{x}$, we get
\begin{equation}\label{Eqn:final_C}
  \mathbf{c} \Leftarrow \mathbf{c}[ \pi(\mathbf{\hat{x}} \rightarrow \mathbf{x})]* \mathrm{sign}( \mathbf{x})
\end{equation}
where $\pi(\mathbf{\hat{x}} \rightarrow \mathbf{x})$ is the inverse mapping from the sorted vector $\mathbf{\hat{x}}$ to the original vector $\mathbf{x}$ and $*$ denotes elementwise multiplication. Note that the scaling step and the rearrangement step can be handled together by knowing the permutation $\pi$ only, witout obtaining the sorted vector $\mathbf{\hat{x}}$.

\begin{algorithm}                      
\caption{Solve-Select-Scale Algorithm}          
\label{alg1}                           
\begin{algorithmic}                    
    \REQUIRE $\mathbf{A,b}$, $\eta_{start}$, $\eta_{end}$, $\epsilon$ 
    \STATE	$\eta \leftarrow \eta_{start}$
    \STATE	$\mathbf{c} \leftarrow \mathbf{1}(size(\mathbf{x}))$
    \WHILE{$\eta < \eta_{end}$}
        \STATE SOLVE $\mathbf{x}$~Eq.~\ref{Eqn:Min_x_3}
        \STATE SELECT $\rho$~Eq.~\ref{Eqn:rho_3}
        \STATE SCALE  $\mathbf{c}[\rho]$~Eq.~\ref{Eqn:c_update_scaled}, set all other entries to 0
	 \STATE   $\eta \leftarrow (1+\epsilon)\eta$
    \ENDWHILE
\end{algorithmic}
\end{algorithm}


\begin{figure}[ht]
\begin{center}
\includegraphics[width=8cm,height=3cm]{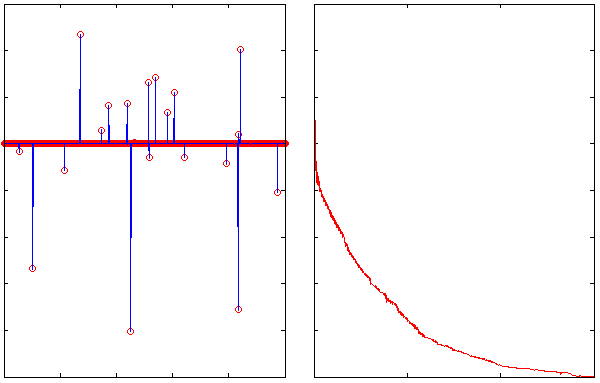}
\includegraphics[width=8cm,height=3cm]{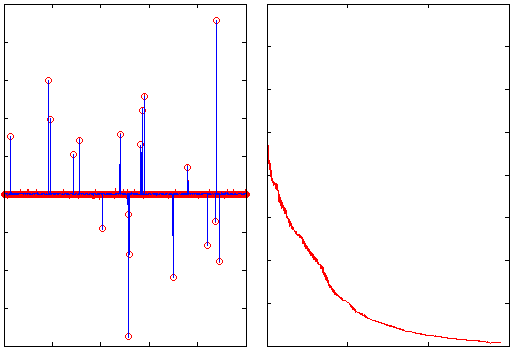}
\caption{Simulation results. Left: Original signal (red circles) and its reconstruction in  blue $\{N,K,M\} = \{1000,20,600\}$. Right: error norm $\|\mathbf{Ax-b}\|^2$ for 1500 iterations.}
\label{Fig:ReconstructionRes}
\end{center}
\end{figure}

\section{Convergence}
We borrow heavily from the convergence analysis for additive half quadratic method as developed my Nikolova and Ng.~\cite{Nikolova05}. Let us write the unconstrained cost function as
\begin{equation}\label{Eqn:conv_cost}
  J(\mathbf{x}):= f(\mathbf{x}) +  \beta (\mathbf{Ax - b})^2
\end{equation}
where $f(.)$ is the sparsity producing metric given in Eq.~\ref{EQN:SparseMetric}. Now we can introduce the auxillary variable $\mathbf{c}$ as
\begin{equation}\label{Eqn:conv_cost}
  \mathcal{J}(\mathbf{x},\mathbf{c}):= f(\mathbf{x}) +  \beta \left[\eta(\mathbf{c-x})^2 + (\mathbf{Ac - b})^2\right]
\end{equation}
Note that for $\eta \rightarrow \infty$
\begin{equation}\label{Eqn:conv_cost}
  J(\mathbf{x}) \approx \min_{\mathbf{c}}\mathcal{J}(\mathbf{x},\mathbf{c})
\end{equation}
For a fixed $\eta$, at iteration $k+1$ we evaluate
\begin{eqnarray}
\nonumber 
 \mathbf{c}^{(k+1)} &\leftarrow& \mathcal{J}(\mathbf{x}^{(k)},\mathbf{c}^{(k+1)}) \le \mathcal{J}(\mathbf{x}^{(k)},\mathbf{c})\\
\nonumber 
 \mathbf{x}^{(k+1)} &\leftarrow& \mathcal{J}(\mathbf{x}^{(k+1)},\mathbf{c}^{(k+1)}) \le \mathcal{J}(\mathbf{x},\mathbf{c}^{(k+1)})
\end{eqnarray}
It can be easily verified by substituting $\{\mathbf{x}^{(k)},\mathbf{c}^{(k)}\}$ for the index free $\{\mathbf{x},\mathbf{c}\}$ above that the iterates form a decreasing sequence
\begin{equation}\nonumber
  \{\ldots \mathcal{J}(\mathbf{x}^{(k)},\mathbf{c}^{(k)}) \ge \mathcal{J}(\mathbf{x}^{(k+1)},\mathbf{c}^{(k+1)})\ge \ldots\}
\end{equation}
which according to Eq.~\ref{Eqn:conv_cost} 
leads to 
\begin{equation}\nonumber
  \{\ldots {J}(\mathbf{x}^{(k)}) \ge {J}(\mathbf{x}^{(k+1)})\ge \ldots\}
\end{equation}
Also note that the constant $\beta=1$ can be absorbed into the formulation resulting in the development mentioned in Sec.~\ref{SEC:Optimization}. 

\section{Comparison with other optimization schemes}
We would like to bring out the differences between our method and traditional optimization schemes. We would like to compare against Frank-Wolfe~\cite{FrankWolfe1956} method which proceeds by solving linear problems around the current iterate. At this point we point that the gradient as well as the Hessian of the cost function (Eq.~\ref{EQN:SparseMetric}) become unbounded whenever some elements of the unknown vector $\mathbf{x}$ go to zero. Consequently, these entries have to be tracked independently to make sure that the optimization remains tractable. However, once some entry has been removed from the optimization, it remains uncertain whether (or when) it should be reintroduced (or not). Hence, any random initialization, with some known entries very close to zero, will never be able to put non-zero values to these entries of the unknown vector $\mathbf{x}$. For our method this part is already taken care of and we do not need to remove any elements (or subsequently add them later) based on their values in the current iterate. The main benefit of our scheme is the fact that the method identifies the support as well as minimizes the cost function. The number of non-zero entries is determined by the term $\rho$ and the unique sorting of the indices takes care of the selection criteria.

A closer look at the cost function in Eq.~\ref{EQN:SparseMetric} reveals that it is a combination of one convex function $-\sum \log |x_i|^2$ and a concave term $\log \mathbf{x^Tx}$. As detailed in the concave-convex procedure by Yuille and Rangarajan~\cite{Yuille2003}, the cost function indeed steps through a saw-tooth type profile while being consistently decreased in the experiments. This can be observed in the zoomed up profile of the cost function as shown in Fig.~\ref{Fig:sawtooth}.
\begin{figure}[ht]
\begin{center}
\includegraphics[width=8cm]{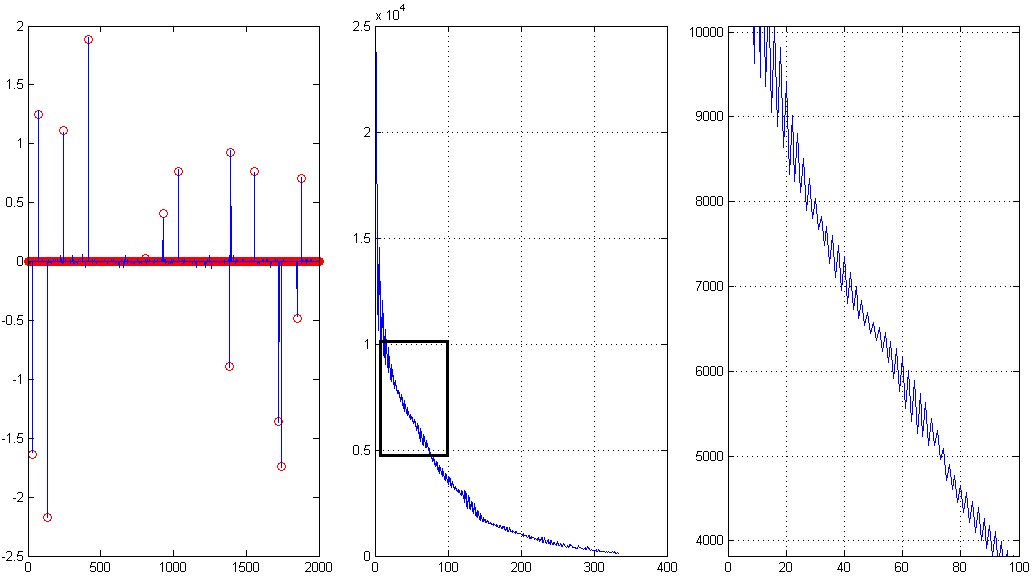} 
\caption{Simulation results for noiseless case. $N=1000$, $K=15$, $M = \{500\}$. The region inside the box in the centre panel is zoomed in the right panel. 
}
\label{Fig:sawtooth}
\end{center}
\end{figure}  

%

\section{Experiments}
First set of experiments work with noiseless simulations to evaluate the system introduced in this paper. We set the data dimension $n=1000$, the sparsity $k=\{5,10,15,20,25\}$ and number of measurements $m=\{600,700,800,900\}$. We perform 50 rounds of simulation for each combination of $\{n,k,m\}$ and run each simulation set for 1500 iterations. Fig.~\ref{Fig:ReconstructionRes} shows two runs for the simulations for $\{N,K,M\} = \{1000,20,600\}$. All the `on' entries of the original vector are identified for the noiseless case. Next, we study the ratio $\hat{s}/s_0$ for different number of measurements $m$ where $\hat{s}$ is the sparsity metric evaluated for the reconstucted signal and $s_0$ is the robust sparsity for the original signal. 

\begin{figure}[ht]
\begin{center}
\includegraphics[width=8cm]{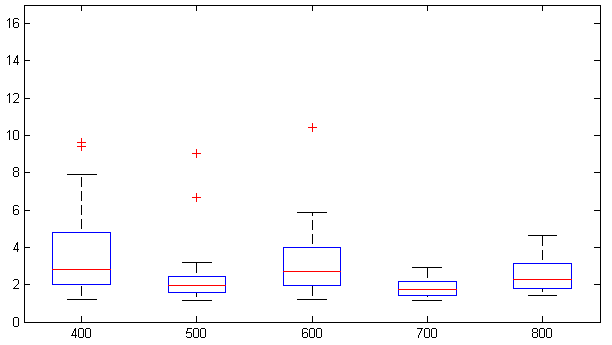}
\caption{Simulation results for noiseless case. $N=1000$, $K=20$, $M = \{500,600,\ldots,900\}$. Note the median value for the ratio $\hat{s}/s$ ranges between 1.5 to 3.}
\label{Fig:sEstimateNoiseless}
\end{center}
\end{figure} 

For the next set of experiments, we evaluate the performance of our algorithm under noisy conditions. We add noise to the measurements and try to solve a slightly modified problem
\begin{eqnarray}
  \min_\mathbf{x} && -\sum_i \log |x_i|^2 + {\log \|\mathbf{x}\|_2^2} \\
\label{Eq.StoppingCondNoise}
  s.t. &&  (\mathbf{Ax - b})^2 < \sigma^2
\end{eqnarray}
where $\sigma^2$ is a function of the noise variance. From an algorithm perspective, we modify the stopping criteria as Eq.~\ref{Eq.StoppingCondNoise}. The results for varying $m$ are shown in Fig.~\ref{Fig:sEstimateNoisy}.
\begin{figure}[ht]
\begin{center}
\includegraphics[width=8cm]{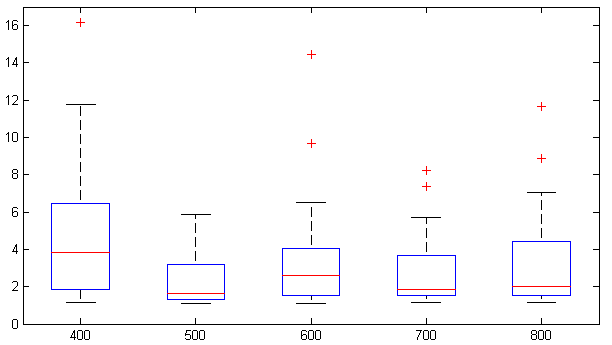}
\caption{Simulation results for noisy case. $N=1000$, $K=20$, $M = \{300,400,\ldots,700\}$, $\sigma^2 = .01$.}
\label{Fig:sEstimateNoisy}
\end{center}
\end{figure} 

\subsection{Comparison to COSAMP}
Our method is very close to the compressive sampling matching pursuit (COSAMP) proposed by Tropp and Needell~\cite{Tropp08}. Concentrating on random Gaussian matrices as our measurement system, we can make sure that all the requirements of the restricted isometric property (Candes and Tao~\cite{Candes06}) are taken care of. The sorting step is the same as the identification phase, least square estimation is the same as the solve for $\mathbf{x}$ stage. The key difference is the fact that for COSAMP to converge a good enough estimation of the unknown support is needed. Once this requirement is not satisfied, COSAMP starts to loose the sparsity requirement with alarming rate.
\begin{figure*}[ht]
\begin{center}
\includegraphics[width=16cm,height=7cm]{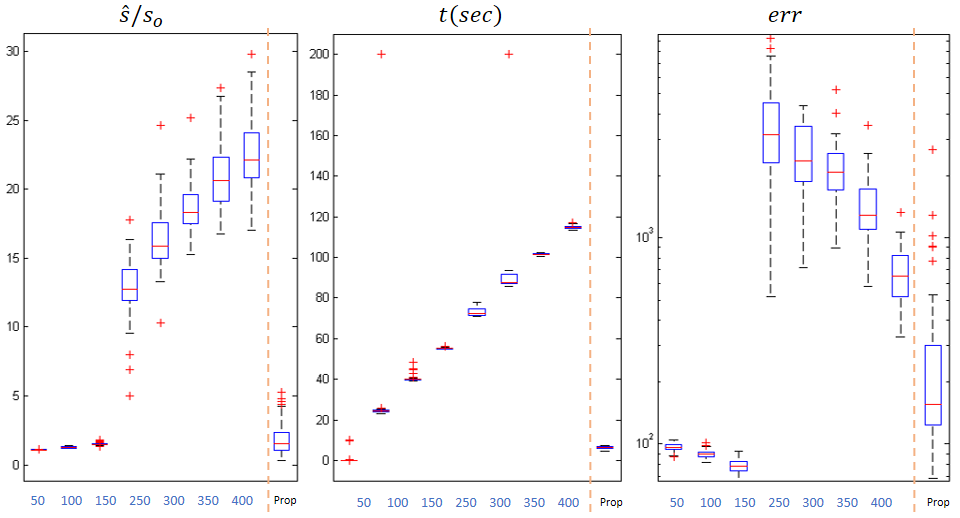}
\caption{Comparison against COSAMP. $N=1000$, $K=20$, $M = 500$. Left to right: sparsity estimate $\hat{s}/s_o$, runtime $t$ in seconds and reconstruction error $\|\mathbf{Ax-b}\|$ for 50 random runs with each setting. The sparsity estimate `k' provided to COSAMP is shown at the horizontal axis. For all the charts, last plot to the right is for the proposed method (prop). Note that the sparsity estimate is more than 10 consistently for COSAMP once `k' moves far from the true estimate. All the experiments were performed with noisy data.}
\label{Fig:CompCosamp}
\end{center}
\end{figure*} 

\section{Conclusion}
In this paper we introduce a new algorithm to reconstruct sparse signals from noisy measurements of the form $\mathbf{Ax=b+\epsilon}$. The proposed method does not need any sparsity estimate for the unknown signal. The solve-select-scale paradigm estimates all the required parameters from the signal itself and can be implemented as an iterative routine, with only a least square solver. The ease of formulation and signal estimation leads to a very simple algorithm. We report noiseless experiments to establish the merits of the algorithm. Finally, we provide extensive comparison against COSAMP (Tropp and Needell~\cite{Tropp08}) and show how the performance of COSAMP is heavily tied to the initial sparsity estimate of the unknown signal provided as a parameter to the algorithm. Our proposed method, on the other hand, is not dependent on any such parameter. In the future, we would like to continue to experiment with different class of measurement matrices.

\bibliography{example_paper}
\bibliographystyle{acl2016} 

\end{document}